\definecolor{darkblue}{rgb}{0,0.0,0.55}
\newcommand{\dist}{\delta}
\newcommand{\frob}[1]{\|#1\|_{\text{F}}}
\renewcommand{\H}{\mathbb{H}}
\newcommand{\pd}{\mathbb{P}}
\newcommand{\reals}{\mathbb{R}}
\newcommand{\half}{\tfrac12}
\newcommand{\pfrac}[2]{\left(\tfrac{#1}{#2}\right)}
\newcommand{\ld}{\ell{}d}
\newcommand{\ip}[2]{\langle {#1},\, {#2} \rangle}
\DeclareMathOperator{\trace}{tr}
\DeclareMathOperator{\qjsd}{QJSD}
\DeclareMathOperator{\qjrd}{QJRD}
\DeclareMathOperator*{\argmin}{argmin}
\newtheorem{theorem}{Theorem}[section]
\newtheorem{cor}[theorem]{Corollary}
\newtheorem{conj}[theorem]{Conjecture}
\newtheorem{prob}[theorem]{Problem}
\theoremstyle{definition}
\newtheorem{defn}[theorem]{Definition}
\newtheorem{example}[theorem]{Example}
\newtheorem{rmk}[theorem]{Remark}
\numberwithin{equation}{section}
\begin{document}
\title{Metrics Induced by Jensen-Shannon and Related Divergences on Positive Definite Matrices}

\author{\name{Suvrit Sra} \email{suvrit@mit.edu}\\
  \addr{Massachusetts Institute of Technology, Cambridge, MA 02139}
}

\maketitle
\begin{abstract}
  We study metric properties of symmetric divergences on Hermitian positive definite matrices. In particular, we prove that the square root of these divergences is a distance metric. As a corollary we obtain a proof of the metric property for Quantum Jensen-Shannon-(Tsallis) divergences (parameterized by $\alpha\in [0,2]$), which in turn (for $\alpha=1$) yields a proof of the metric property of the Quantum Jensen-Shannon divergence that was conjectured by %Bri\"et and Harremo\"es a decade ago (\emph{Properties of classical and quantum Jensen-Shannon divergence}, Phy.\ Rev.\ A, \textbf{79}, 052311 (2009)). 
Lamberti \emph{et al.} a decade ago (\emph{Metric character of the quantum Jensen-Shannon divergence}, Phy.\ Rev.\ A, \textbf{79}, (2008).) A somewhat more intricate argument also establishes metric properties of Jensen-R\'enyi divergences (for $\alpha \in (0,1)$), and outlines a technique that may be of independent interest. % for proving this property for various other symmetric divergences. %Most notably, using this technique we obtain a new proof of the metric property of the S-Divergence, whose metric property is at the heart of our proofs for the noted conjectures.
\end{abstract}

% \begin{keywords}
%   Jensen divergence; 
% \end{keywords}

\section{Introduction}
We study metric properties of symmetrized divergence measures on hermitian positive definite (hpd) matrices. Such divergence measures are used in a wide variety of applications, ranging from quantum information theory~\cite{briet09,lamberti2008metric,Dajka2011}, to optimization~\cite{sra2016matrix,sra2015conic}, to machine learning and computer vision~\cite{wang2015beyond,zhang2015learning,cherian2012jensen}, among others~\cite{yger2016riemannian,harandi2017dimensionality,rossi2013characterizing,lin1991divergence,fuglede2004jensen,briet09,burbea1982convexity,burbea1982entropy,sdiv,nielsen2011burbea}.

Our focus on studying metric properties of these divergences is primarily driven by the aim to build a theory that answers Conjecture~\ref{conj:one} as a special case. A secondary aim is to obtain a family of metrics closely related to the S-Divergence~\cite{sdiv} (which has found a variety of applications), and thus provide a new family of potentially useful metrics. Remarkably, the metric property of the S-Divergence, which was the central result of~\cite{sdiv}, plays a crucial role in the present paper too.

The divergence underlying our primary aim is obtained by symmetrizing a Bregman divergence (see Section~\ref{sec:background}), or equivalently, by using midpoint convexity. For instance, consider the \emph{von Neumann entropy}
\begin{equation}
  \label{eq:vnent}
  S(X) := -\trace(X \log X), \qquad X \in \pd_d,
\end{equation}
which leads to the so-called the \emph{Quantum Jensen-Shannon divergence}~\cite{lamberti2008metric}:
\begin{align}
  \label{eq:18}
  \qjsd(X,Y) &:= S\pfrac{X+Y}{2}-\half\bigl(S(X)+S(Y)\bigr).
\end{align}
Divergence~\eqref{eq:18} has found a variety of applications, including several cited above. While it is clearly symmetric and nonnegative, it is not a true distance; nevertheless, empirically its square root $\qjsd^{1/2}$ has been observed to satisfy the triangle inequality~\cite{lamberti2008metric,briet09}.
% In particular, for von Neumann entropy~\eqref{eq:vnent}, the corresponding relative-entropy has been deeply studied (see e.g., \cite[Ch.~4]{bhatia07}, \cite{nielsen2002quantum}). However, this entropy is asymmetric, which has led numerous researchers to consider symmetric alternatives in a variety of settings, including those outside of quantum information theory---see e.g.,~\cite{}.

A formal study $\qjsd^{1/2}$ as a metric was started by Lamberti et al.~\cite{lamberti2008metric}, who used it for measuring distances between quantum states. They also showed that $\qjsd$ is the square of a metric for pure states. Shortly thereafter, Bri\"et and Harremo\"es~\cite{briet09} claimed  that~\eqref{eq:19} is the square of a Hilbertian metric for qubits and pure states of any dimension; their proof, apparently contains an error, and a proof was furnished by Carlen, Lieb and Seiringer---please see \cite[\S3]{daniel} for more details. However, for general quantum states no results are known (apart from the parallel work~\cite{daniel}, which was brought to our notice by its author, and which furnishes a proof of Conjecture~\ref{conj:one}). Specifically, Lamberti et al.\ made the following conjecture:

\vskip5pt
\noindent\colorbox{gray!20}
{\begin{minipage}{.97\linewidth}
  \begin{conj}[Lamberti et al.~\cite{lamberti2008metric}]
  \label{conj:one}
  $\qjsd^{1/2}$ is a metric on $\pd_d$ (see also \cite{briet09}).
\end{conj}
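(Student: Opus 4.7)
The plan is to reduce the metric property of $\qjsd^{1/2}$ to that of the square-root S-Divergence $\delta_S^{1/2}$, which is known to be a metric on $\pd_d$ by the central result of~\cite{sdiv}. The key device will be an integral representation expressing $\qjsd$ as a continuous mixture of shifted S-Divergences; Minkowski's inequality in $L^2$ then lifts the triangle inequality from each integrand to the integral.

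For Step 1 (the integral representation) I would start from the elementary scalar identity
\[
x\log x \;=\; (x-1) + \int_0^\infty \left[\frac{x-1}{1+t} - \log\frac{x+t}{1+t}\right] dt, \qquad x > 0,
\]
which is obtained by integrating $\log x = \int_0^\infty \bigl(\tfrac{1}{1+t} - \tfrac{1}{x+t}\bigr)\, dt$ and matching at $x = 1$. Applying this spectrally to the eigenvalues of $X \in \pd_d$ (and exchanging sum with integral by Fubini) gives
\[
\trace(X\log X) = \trace(X) - d + \int_0^\infty \left[\frac{\trace(X) - d}{1+t} - \log\det(X+tI) + d\log(1+t)\right] dt.
\]
Substituting into $-\qjsd(X,Y) = \trace(M\log M) - \tfrac{1}{2}[\trace(X\log X) + \trace(Y\log Y)]$ with $M = (X+Y)/2$, every affine-in-trace term as well as every $(1+t)^{-1}$ and $d\log(1+t)$ contribution cancels (using $\trace(M) = \tfrac{1}{2}[\trace(X)+\trace(Y)]$), leaving the clean formula
\[
\qjsd(X,Y) \;=\; \int_0^\infty \left[\log\det(M+tI) - \tfrac{1}{2}\log\det(X+tI) - \tfrac{1}{2}\log\det(Y+tI)\right] dt \;=\; \int_0^\infty \delta_S(X+tI,\, Y+tI)\, dt,
\]
where the integrand is recognized as the S-Divergence of the shifted matrices. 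A Taylor expansion shows the integrand is $O(1/t^2)$ at infinity, so convergence is automatic.

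For Step 2, note that for each fixed $t \geq 0$ the function $d_t(X,Y) := \delta_S^{1/2}(X+tI, Y+tI)$ is a metric on $\pd_d$: symmetry, identity of indiscernibles, and the triangle inequality all transfer from the metric property of $\delta_S^{1/2}$, since $X + tI, Y + tI \in \pd_d$ whenever $X, Y \in \pd_d$. Applying Minkowski's inequality in $L^2([0,\infty), dt)$ to $t \mapsto d_t(X,Y)$ and $t \mapsto d_t(Y,Z)$, combined with the pointwise bound $d_t(X,Z) \leq d_t(X,Y) + d_t(Y,Z)$, immediately yields $\qjsd^{1/2}(X,Z) \leq \qjsd^{1/2}(X,Y) + \qjsd^{1/2}(Y,Z)$. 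Symmetry and nonnegativity are obvious, and $\qjsd^{1/2}(X,Y) = 0$ forces $\delta_S(X+tI,Y+tI) = 0$ for almost every $t \geq 0$, hence $X = Y$.

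The whole argument hinges on Step 1; once the integral representation is in hand, Step 2 is a routine Minkowski argument on top of the known metric property of $\delta_S^{1/2}$. I expect this same blueprint---\emph{express a symmetrized Bregman divergence as an integral of S-type divergences, then apply Minkowski}---to be the unifying theory the paper promises in its abstract, with the Jensen--Tsallis and Jensen--R\'enyi families handled by finding the appropriate analogues of the above scalar identity for their respective generating potentials.
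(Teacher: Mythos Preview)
Your proposal is correct and follows essentially the same approach as the paper: express the divergence as an integral (over a shift parameter $t$) of S-Divergences $\delta_S^2(X+tI,Y+tI)$, then invoke the metric property of $\delta_S$ from~\cite{sdiv} together with Minkowski to obtain the triangle inequality. The only cosmetic difference is that the paper first proves the general Theorem~\ref{thm:one} for the Tsallis family via the representation of $x^\alpha$ and then recovers $\qjsd$ as the limit $\alpha\to 1$, whereas you work directly with an integral representation of $x\log x$; your route is slightly more direct for the conjecture itself, while the paper's detour through $x^\alpha$ buys the Tsallis corollaries along the way. (One notational caveat: what you call $\delta_S$ is the paper's $\delta_S^2$, so your $\delta_S^{1/2}$ is the paper's $\delta_S$.)
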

\end{minipage}
}

\vskip4pt
\noindent We address this conjecture as a byproduct of the following more general task:
\begin{center}
  \em Provide simple conditions on Jensen divergences to be squares of metrics.
\end{center}

\subsection{Summary}
The main contributions of this paper are as follows:
\begin{list}{{\tiny\color{darkblue}$\blacksquare$}}{\leftmargin=3em}
  \setlength{\itemsep}{3pt}
\item We prove in Theorem~\ref{thm:one} the metric property for a rich class of Jensen divergences (please see Section~\ref{sec:background} for background). This class includes  $\qjsd_\alpha$ ($\alpha \in [0,2]$) as a special case, and thus \emph{a fortiori} also includes $\qjsd$, answering Conjecture~\ref{conj:one}. Moreover, it extends to more general settings based on certain convex functions (Theorem~\ref{thm:two}). Both Theorem~\ref{thm:one} and~\ref{thm:two} rely on integrals related to Pick functions (Section~\ref{sec:pick}).
  
\item We subsequently prove in Theorem~\ref{thm:renyi} the harder result that the quantum Jensen-R\'enyi divergence $\qjrd_\alpha$ is the square of a metric for $\alpha \in (0,1)$. Our technique relies on integral representations of completely monotonic functions and an argument based on $3\times 3$ matrices that may be of independent interest.

\item In Section~\ref{sec:relative} we prove the Jensen-Shannon divergence generated by the $\alpha$-Tsallis relative entropy is also the square of a metric. 
  
%\item Our proof of Theorem~\ref{thm:renyi} develops a useful technique for proving the metric property of a variety of divergences related to completely monotonic functions. This technique yields for instance a new proof of the metric property of the S-divergence~\cite{sdiv} (Section~\ref{sec:}). This technique should be of independent interest too.
\end{list}

\noindent\textbf{Note:} The closely related work~\cite{daniel} appeared parallel to ours.\footnote{More pedantically, that work appeared on arXiv a few days before our work appeared there.} The main result of~\cite{daniel} is a proof of the metric property of $\qjsd$. Our results cover not only $\qjsd$, but identify a general integral representation via a connection to Pick functions, and establish metric properties for a much wider class of divergences.

\section{Background}
\label{sec:background}
We begin by recalling some basic facts about divergences. Perhaps the most well-known divergence is the \emph{Bregman divergence}~\cite{censor97}\footnote{Bregman divergences over scalars and vectors have been well-studied; see e.g.,~\cite{censor97,banerjee04b}. They are called divergences because they are not distances.}, which is generated by differentiable and convex function $f : \reals^n \to \reals$ as follows,
\begin{equation}
  \label{matrix:eq.breg.scalar}
  D_f(x,y) := f(x) - f(y) - \ip{\nabla f(y)}{x-y}.
\end{equation}

By construction, $D_f(x,y)$ is nonnegative, convex in $x$, and equals $0$ if $x=y$. It is typically asymmetric and does not satisfy the triangle inequality, which explains the name ``divergence.''

\begin{example}
  \label{eg.bdiv}
  Some common Bregman divergences are listed below.
  \begin{itemize}
    \setlength{\itemsep}{-1pt}
  \item \textbf{Squared $\ell_2$-distance}: Let $f(x)=\half x^Tx$, then $D_f(x,y)=\half\|x-y\|_2^2$.
  \item \textbf{KL divergence} on $\reals_{++}$. $f(x) = x \log x$, so $D_f(x,y)=x\log(x/y)-x+y$.
  \item \textbf{Burg divergence} on $\reals_{++}$. $f(x)=-\log x$, so $D_f(x,y)=\log(y/x)+x/y-1$.
  \end{itemize}
\end{example}

The Bregman divergence~\eqref{matrix:eq.breg.scalar} extends naturally to hermitian matrices. Let $X, Y$ be hermitian, and let the scalar function $f$ be defined on hermitian matrices the usual way (via spectral decomposition), then the \emph{Bregman matrix divergence} is defined as
\begin{equation}
  \label{matrix:eq.breg}
  D_f(X,Y) := \trace f(X) -\trace f(Y) - \ip{f'(Y)}{X-Y}.
\end{equation}
It is an instructive exercise to verify that $D_f(X,Y) \ge 0$.
\begin{example}
  \label{eg.mdiv}
  The matrix versions of Example~\ref{eg.bdiv} are:
  \begin{itemize}
    \setlength{\itemsep}{-1pt}
  \item \textbf{Squared Frobenius distance}: here $\trace f(X)=\trace(X^2)$, so that $D_f(X,Y)=\half\frob{X-Y}^2$.
  \item \textbf{von Neumann divergence (Umegaki relative entropy)}: here $\trace f(X)=\trace(X\log X)$, so $D_f(X,Y)$ yields the von Neumann divergence of quantum information theory~\cite{nielsen2002quantum}:
  \begin{equation*}
    D_{\text{vN}}(X,Y) = \trace(X\log X - X\log Y-X+Y).
  \end{equation*}
  \item \textbf{Stein's loss}: here $\trace f(X)=-\log\det(X)$, so $D_f(X,Y)$ becomes
  \begin{equation*}
    D_{\ld}(X,Y) = \trace(Y^{-1}(X-Y)) - \log\det(XY^{-1}),
  \end{equation*}
  which is also known as the \emph{LogDet Divergence}~\cite{kulis}, or \emph{Stein's loss}~\cite{stein56}.
  \end{itemize}
\end{example}

\subsection{Jensen and Jensen-Shannon Divergences}
Although Bregman divergences are widely useful, their asymmetry can be undesirable. A popular symmetric alternative is the \emph{Jensen divergence} (sometimes called Jensen-Bregman divergence~\cite{cherian2012jensen}):
\begin{equation}
  \label{matrix:eq.js}
  S_f(X,Y) := \half\bigl(D_f(X, \tfrac{X+Y}{2}) + D_f(Y, \tfrac{X+Y}{2})\bigr).
\end{equation}
This divergence has two possibly more transparent representations:
\begin{align}
  \label{matrix:eq.js1}
  S_f(X,Y) &\quad=\quad\half\bigl[\trace f(X) + \trace f(Y)\bigr] - \trace f\bigl(\tfrac{X+Y}{2}\bigr),\\
  \label{matrix:eq.js2}
  S_f(X,Y) &\quad=\quad \min_Z\quad \half[D_f(X,Z) + D_f(Y,Z)].
\end{align}
\begin{rmk}
  In some contexts~\eqref{matrix:eq.js1} is also called the \emph{Jensen-Shannon divergence}. But for clarity within the context of the quantum setting, we reserve that name for symmetrization~\eqref{matrix:eq.js2} applied to $D_f$ being a suitable quantum relative entropy.
\end{rmk}

\begin{example}
  The symmetric versions of Example~\ref{eg.mdiv} are:
  \begin{itemize}
    \setlength{\itemsep}{-1pt}
  \item If $\trace f(X)=\half\trace X^2$, we obtain $S_f(X,Y)=D_f(X,Y)=\half\frob{X-Y}^2$.
  \item If $\trace f(X)=\trace(X\log X)$, both~\eqref{matrix:eq.js1} and~\eqref{matrix:eq.js2} yield the $\qjsd$~(\ref{eq:18}).
  %   \begin{equation}
  %     \label{matrix:eq.jvn}
  %       S_{f}(X,Y) \equiv S_{\text{vN}}(X,Y) := \half\left(\trace X\log X + \trace Y\log Y\right) - \trace\pfrac{X+Y}{2}\log\pfrac{X+Y}{2}.
  %   \end{equation}
  %
  \item For $\trace f(X)=-\log\det(X) \equiv -\ld(X)$, we obtain the \emph{S-Divergence}~\cite{sdiv}:
    \begin{equation}
      \label{matrix:eq.ss}
      S_f(X,Y) \equiv S_{\ld}(X,Y) := \delta_S^2(X,Y) := \ld\pfrac{X+Y}{2}-\half\ld(X)-\half\ld(Y).
    \end{equation}
  \end{itemize}
\end{example}
With this background, we are now ready to present the main results on this paper.

\section{Metric Properties of Quantum Jensen Divergences}
In this section we study symmetric divergence whose square roots are metrics. The class of divergences we cover is chosen to be able to capture the $\alpha$-Tsallis generalization to QJSD~(\ref{eq:18}):
\begin{align}
  \label{eq:19}
  \qjsd_{\alpha}(X,Y) &:= S_\alpha\pfrac{X+Y}{2}-\half\bigl(S_\alpha(X)+S_\alpha(Y)\bigr),
\end{align}
where $S_\alpha$ is the \emph{$\alpha$-Tsallis entropy}
\begin{align}
  \label{eq:tsallis}
  S_\alpha(X) &:= \frac{\trace(X^\alpha)-\trace X}{1-\alpha},\quad \alpha\in [0,2]\setminus\{1\}.
\end{align}
Note that, $\lim_{\alpha\to 1^+}S_\alpha(X)=S(X)$; and that $S_\alpha$ is concave on hpd matrices. % on positive definite matrices. 

We present a technique which implies that $\qjsd_\alpha^{1/2}$ is a metric as a special case; this result in turns immediately solves Conjecture~\ref{conj:one} (which corresponds to $\alpha\to 1^{+}$). Specifically, consider a function $f$ that admits the following representation on $(0,\infty)$:
\begin{equation}
  \label{eq:7}
  f(x) = a + b x + c \log x + \int_{0}^\infty \log\frac{(t+x)}{h(t)}d\mu(t),
\end{equation}
where $a, b \in \reals$, $c \ge 0$, $h(t)>0$, and $\mu$ is a nonnegative measure. This function is concave, so using it we define the \emph{quantum Jensen divergence} (on hpd matrices):
\begin{equation}
  \label{eq:8}
  \Delta_f(X,Y) := \trace\bigl[f\bigl(\tfrac{X+Y}{2}\bigr)-\half f(X)-\half f(Y)\bigr],
\end{equation}
Our first main result is Theorem~\ref{thm:one}.

\vskip5pt
\noindent\colorbox{gray!20}{\begin{minipage}{0.97\linewidth}
\begin{theorem}
  \label{thm:one}
  Let $f$ be a function that admits the representation~\eqref{eq:7}, and let $\Delta_f$ be the Jensen-divergence~\eqref{eq:8}. Then, $\Delta_f^{1/2}$ is a distance on $\pd_d$.
\end{theorem}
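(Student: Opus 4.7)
The plan is to reduce Theorem~\ref{thm:one} to the metric property of the S-divergence $\delta_S$ established in~\cite{sdiv}, by using the integral representation~\eqref{eq:7} to express $\Delta_f$ as an $L^2$-aggregation of translated S-divergences.

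First, I would expand $\trace f(X)$ via spectral calculus: substituting~\eqref{eq:7} into $\trace f(X)=\sum_i f(\lambda_i(X))$ gives
\begin{equation*}
\trace f(X) = ad + b\,\trace(X) + c\log\det(X) + \int_0^\infty \bigl[\log\det(tI+X) - d\log h(t)\bigr]\,d\mu(t).
\end{equation*}
The constant, the linear piece $b\,\trace(X)$, and the $h$-dependent term are all affine in $X$, so they cancel under the Jensen combination $\trace f(\tfrac{X+Y}{2}) - \tfrac12\trace f(X) - \tfrac12\trace f(Y)$. Using the key algebraic identity $tI + \tfrac{X+Y}{2} = \tfrac{(tI+X)+(tI+Y)}{2}$, the remaining terms organize as
\begin{equation*}
\Delta_f(X,Y) = c\,\delta_S^2(X,Y) + \int_0^\infty \delta_S^2(tI + X,\; tI + Y)\,d\mu(t),
\end{equation*}
exhibiting $\Delta_f$ as a nonnegative $L^2(\mu)$-aggregation of squared S-divergences.

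Second, for each fixed $t \ge 0$ the map $X \mapsto tI + X$ sends $\pd_d$ into $\pd_d$, so the pullback $(X,Y) \mapsto \delta_S(tI + X,\, tI + Y)$ inherits the metric property of $\delta_S$ from~\cite{sdiv}; call this metric $\delta_t$. The theorem then follows from a standard fact: if $\{\delta_t\}_{t \ge 0}$ is a $\mu$-measurable family of metrics on a common set, then $\bigl(\int \delta_t^2\,d\mu(t)\bigr)^{1/2}$ is again a metric, because the pointwise triangle inequality $\delta_t(X,Z) \le \delta_t(X,Y) + \delta_t(Y,Z)$ propagates through Minkowski's inequality in $L^2(\mu)$. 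Positive definiteness of $\Delta_f^{1/2}$ follows since $\delta_S^2(X,Y) > 0$ whenever $X \ne Y$, so any nontrivial contribution from $c > 0$ or a nonzero $\mu$ strictly separates distinct points.

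The main obstacle I anticipate is not conceptual but technical: justifying the Fubini-type interchange of trace with the integral in~\eqref{eq:7}, and confirming $\mu$-measurability and $\mu$-integrability of $t \mapsto \delta_S^2(tI+X,\,tI+Y)$ so that Minkowski actually applies. These reduce to standard regularity facts about $\log\det(tI+X)$ as a function of $t$ together with the standing assumption that $f$ in~\eqref{eq:7} is well defined on $(0,\infty)$; I would verify them explicitly before invoking the Minkowski step, but I do not expect genuine difficulty there.
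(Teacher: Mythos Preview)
Your proposal is correct and follows essentially the same route as the paper: both substitute the integral representation~\eqref{eq:7} into $\Delta_f$, observe that the affine pieces cancel under the Jensen combination, and arrive at the identity $\Delta_f(X,Y) = c\,\delta_S^2(X,Y) + \int_0^\infty \delta_S^2(tI+X,\,tI+Y)\,d\mu(t)$, from which the triangle inequality follows by Theorem~\ref{thm:sdiv} and the $L^2$-aggregation (Minkowski) argument. The paper's write-up is terser and omits the Fubini/measurability caveats you flag, but the underlying decomposition and the appeal to the S-divergence metric are identical.
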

\end{minipage}
}

\vskip5pt
\noindent Crucial to our proof is the metric property of the S-Divergence~\eqref{matrix:eq.ss}. %, which is the (quantum) Jensen divergence generated by $\log\det(\cdot)\equiv\ld(\cdot)$; formally, the S-Divergence is defined as
% \begin{equation}
%   \label{eq:sdiv}
%   \delta_S^2(X,Y) := \ld\pfrac{X+Y}{2}-\half\ld(X)-\half\ld(Y).
% \end{equation}

\vskip4pt
\noindent\colorbox{gray!20}{\begin{minipage}{0.97\linewidth}
\begin{theorem}[Sra~(2016)~\cite{sdiv}]
  \label{thm:sdiv}
  $\delta_S$ given by~\eqref{matrix:eq.ss} is a metric.
\end{theorem}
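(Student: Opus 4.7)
The plan is a spectral reduction. Symmetry of $\delta_S$ is immediate from~\eqref{matrix:eq.ss}, while (strict) concavity of $\ld := \log\det$ on $\pd_d$ yields both nonnegativity and the identity of indiscernibles. The substantive content is the triangle inequality, which I would derive by first reducing to a scalar inequality on generalized eigenvalues and then lifting back to $\pd_d$ via matrix log-majorization.

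Two structural facts set up the reduction. First, $\delta_S$ is congruence-invariant: $\delta_S(AXA^*, AYA^*) = \delta_S(X,Y)$ for any invertible $A$, because $\ld(ABA^*) = \ld(B) + 2\log|\det A|$ and the $\log|\det A|$ contributions cancel in~\eqref{matrix:eq.ss}. This lets me conjugate a triangle $(X, Y, Z)$ by $Y^{-1/2}$ to assume $Y = I$. Second, writing $M := X^{-1/2} Y X^{-1/2}$ with eigenvalues $\lambda_1, \ldots, \lambda_d$, a direct calculation gives the spectral identity
\[
  \delta_S^2(X, Y) \;=\; \ld\bigl(\tfrac{I+M}{2}\bigr) - \tfrac12 \ld(M) \;=\; \sum_{i=1}^d \log\cosh\bigl(\tfrac12 \log\lambda_i\bigr),
\]
so that $\delta_S(X, Y) = \|g(\log\lambda(M))\|_{\ell^2}$ with $g(t) := \sqrt{\log\cosh(t/2)}$. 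The accompanying scalar step is to prove $g$ is subadditive on $\reals$; evenness reduces this to $[0,\infty)$, where a direct calculus check of $g''$ shows $g$ is concave on $[0,\infty)$, and concavity combined with $g(0) = 0$ forces subadditivity.

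The final step is the matrix lift. After normalizing $Y = I$, I would combine (i) a Horn--Weyl log-majorization of the form $\log\lambda(X^{-1/2} Z X^{-1/2}) \prec_w \log\lambda(Z) - \log\lambda(X)$ with (ii) evenness and subadditivity of $g$ on the log-eigenvalue vectors, and (iii) a Ky Fan-style inequality for the unitarily invariant norm $\|g(\cdot)\|_{\ell^2}$, to conclude
\[
  \|g(\log\lambda(X^{-1/2} Z X^{-1/2}))\|_{\ell^2} \;\le\; \|g(\log\lambda(X))\|_{\ell^2} + \|g(\log\lambda(Z))\|_{\ell^2},
\]
which is exactly $\delta_S(X, Z) \le \delta_S(X, I) + \delta_S(I, Z)$. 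The main obstacle is precisely this matrix-to-scalar lift: the generalized eigenvalues of $(X, Z)$ are not simple functions of those of $(X, I)$ and $(I, Z)$, so the $\ell^2$ triangle inequality must be extracted by carefully pairing sorted eigenvalues through the log-majorization, exploiting that $g$ is even, nondecreasing on $[0,\infty)$, and subadditive — a program for which the Araki--Lieb--Thirring and Horn inequalities on products of positive matrices are the natural tools.
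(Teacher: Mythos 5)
This statement is not proved in the present paper at all: it is imported wholesale from the reference \cite{sdiv}, so the relevant comparison is with the proof given there. Your outline is, in substance, the same argument as in \cite{sdiv}: normalize the middle vertex of the triangle to $I$ by congruence invariance, pass to generalized eigenvalues via the identity $\delta_S^2(X,Y)=\sum_i\log\cosh\bigl(\tfrac12\log\lambda_i(X^{-1}Y)\bigr)$, control the cross term by an eigenvalue-pairing inequality, and finish with scalar subadditivity plus Minkowski. The two genuine differences are these. First, where you invoke a Gelfand--Naimark/log-majorization statement, \cite{sdiv} uses the equivalent but more hands-on Fiedler determinant inequality $\det(A+B)\le\prod_i\bigl(\lambda_i^{\downarrow}(A)+\lambda_i^{\uparrow}(B)\bigr)$; the two are interchangeable here because $h(t)=\log\cosh(t/2)$ is convex, so Karamata transfers the majorization. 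Be careful on one point you gloss over: you need the \emph{full} majorization $\log\lambda(X^{-1}Z)\prec\log\lambda^{\downarrow}(Z)-\log\lambda^{\uparrow}(X)$ with the oppositely sorted pairing, not merely weak majorization, since $h$ is convex but not monotone; this is harmless because the two sides have equal sums ($\trace\log$), but it should be said, and Araki--Lieb--Thirring is not actually needed anywhere. Second, for the scalar step \cite{sdiv} argues via positive definite kernels and Schoenberg's theorem (showing $e^{-\beta\delta_S^2}$ is a positive definite kernel on scalars for all $\beta>0$), whereas your route through concavity of $g(t)=\sqrt{\log\cosh(t/2)}$ on $[0,\infty)$ is more elementary and does check out: the condition $2hh''\le (h')^2$ reduces to $2\log\cosh u\le\sinh^2 u$, which holds for all $u\ge 0$, and concavity with $g(0)=0$, evenness, and monotonicity on $[0,\infty)$ give subadditivity on all of $\reals$. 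So the proposal is correct in outline and buys a slightly more self-contained scalar step, at the cost of having to state the majorization with the correct orderings.
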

\end{minipage}}

\vskip5pt
\begin{proof}[Proof of Theorem~\ref{thm:one}]
  The only non-trivial part is to prove the triangle inequality for $\Delta_f^{1/2}$.
  % Since $f$ given by~\eqref{eq:7} is clearly concave, the function $\tfrac f(X)$ is concave. Thus, $\Delta_f(X,Y) \ge 0$. If moreover $f$ happens to be strictly concave, then $\Delta_f(X,Y)=0$ if and only $X=Y$. Hence, we write $\delta_f(X,Y) \equiv \Delta_f^{1/2}(X,Y)$. Clearly, $\dist_f$ is symmetric too. The main task is to prove the triangle inequality.
  Using~\eqref{eq:7} and noting that $\trace\log(X) = \ld(X)$
  % \begin{equation*}
  %   \begin{split}
  %       %     \trace f(X) = a\trace(I) + b\trace(X) + c\trace\log X\\  + \int_{0}^\infty \nlsum_i\log\bigl(\tfrac{t+\lambda_i(X)}{h(t)}\bigr)d\mu(t),
  %     \trace f(X) = a\trace(I) + b\trace(X) + c\ld(X) \\
  %     + \int_{0}^\infty\log\left(\tfrac{\det(tI+X)}{h(t)}\right)d\mu(t).
  %   \end{split}
  % \end{equation*}
  % Using this identity,
  we can express $\Delta_f$ as
  \begin{equation*}
    \begin{split}
      &\Delta_f(X,Y) = c\bigl(\ld\pfrac{X+Y}{2}-\half\ld(X)-\half\ld(Y)\bigr)\\
      &+\int_{0}^\infty\bigl[\ld(tI+\tfrac{X+Y}{2})-\half\ld((tI+X)(tI+Y))\bigr]d\mu(t),
    \end{split}
  \end{equation*}
  which may be written in terms of the S-Divergence as
  \begin{equation}
    \label{eq:9}
    \Delta_f(X,Y) = c\dist_S^2(X,Y)+\int_{0}^\infty\dist_S^2(tI+X, tI+Y)d\mu(t).
  \end{equation}
  Since $c\ge 0$, it follows from~\eqref{eq:9} that $\Delta_f$ is a non-negatively weighted sum of squared distances, therefore $\Delta_f^{1/2}$ satisfies the triangle inequality, completing the proof.
\end{proof}

\vskip4pt
\noindent\colorbox{gray!20}
{\begin{minipage}{0.97\linewidth}
\begin{cor}
  \label{cor:1}
  Let $\alpha \in (0,1)$. Then, $\qjsd_\alpha^{1/2}$ is a metric.
\end{cor}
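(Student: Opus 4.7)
The plan is to derive Corollary~\ref{cor:1} as an immediate consequence of Theorem~\ref{thm:one}, applied to the concave generator of $\qjsd_\alpha$. Specifically, let $f_\alpha(x) := (x^\alpha - x)/(1-\alpha)$, so that $\trace f_\alpha(X) = S_\alpha(X)$, and hence the Jensen divergence $\Delta_{f_\alpha}$ defined by~\eqref{eq:8} coincides with $\qjsd_\alpha(X,Y)$. The task therefore reduces to exhibiting an integral representation of $f_\alpha$ in the form~\eqref{eq:7} with nonnegative $c$ and nonnegative measure $\mu$.

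The starting point is the classical Stieltjes-type identity, valid for $\alpha\in(0,1)$:
\[
  x^{\alpha-1} = \frac{\sin(\pi\alpha)}{\pi}\int_{0}^{\infty}\frac{t^{\alpha-1}}{t+x}\,dt,
\]
verified by the substitution $u=t/x$ together with the Beta-function identity $\int_0^\infty u^{\alpha-1}(u+1)^{-1}du = \pi/\sin(\pi\alpha)$. Multiplying by $\alpha$ and integrating in $x$ from $1$ to $x$, then exchanging the order of integration (Fubini, using $\alpha\in(0,1)$ at both endpoints), one obtains
\[
  x^\alpha = 1 + \frac{\alpha\sin(\pi\alpha)}{\pi}\int_{0}^{\infty} t^{\alpha-1}\log\frac{t+x}{t+1}\,dt.
\]
Dividing by $1-\alpha$ and adding the affine term $-x/(1-\alpha)$ yields $f_\alpha$ in the precise form~\eqref{eq:7}, with $a=(1-\alpha)^{-1}$, $b=-(1-\alpha)^{-1}$, $c=0$, $h(t)=t+1$, and
\[
  d\mu(t) = \frac{\alpha\sin(\pi\alpha)}{\pi(1-\alpha)}\,t^{\alpha-1}\,dt \ge 0.
\]

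The one step that calls for care—and where the range restriction $\alpha\in(0,1)$ is used—is convergence: near $t=0$, the integrand is $O(t^{\alpha-1})$, integrable since $\alpha>0$; as $t\to\infty$, $\log\frac{t+x}{t+1}\sim(x-1)/t$, so the integrand is $O(t^{\alpha-2})$, integrable since $\alpha<1$. Note that the $a$ and $bx$ pieces contribute nothing to $\Delta_{f_\alpha}$ (a constant is annihilated by the Jensen combination, and a linear-in-trace term satisfies the midpoint identity with equality), so $\Delta_{f_\alpha}=\qjsd_\alpha$ on the nose. With the representation verified, Theorem~\ref{thm:one} delivers that $\qjsd_\alpha^{1/2}$ is a metric on $\pd_d$. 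All of the analytic heavy lifting is actually absorbed into the S-Divergence metric property (Theorem~\ref{thm:sdiv}) invoked inside Theorem~\ref{thm:one}; the novel work in the corollary is exactly the Pick-type representation of the $\alpha$-Tsallis generator carried out above.
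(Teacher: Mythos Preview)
Your proof is correct and follows essentially the same approach as the paper: exhibit a representation of the Tsallis generator in the form~\eqref{eq:7} and invoke Theorem~\ref{thm:one}. The only cosmetic difference is that the paper uses the slightly slicker identity $x^\alpha = \tfrac{\alpha\sin(\alpha\pi)}{\pi}\int_0^\infty \log\tfrac{t+x}{t}\,t^{\alpha-1}\,dt$ (i.e., $h(t)=t$ with no additive constant), whereas you integrate from $1$ to $x$ and obtain $h(t)=t+1$ with $a=(1-\alpha)^{-1}$; since the choice of $h$ and the affine part are irrelevant to $\Delta_f$, the two arguments are equivalent.
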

\end{minipage}
}
\begin{proof}
  For $0<\alpha<1$ and $x>0$, we use the representation (also exploited in~\cite{sra2019}):
  \begin{equation}
    \label{eq:10}
    x^\alpha = \frac{\alpha \sin(\alpha \pi)}{\pi}\int_{0}^\infty\log\pfrac{t+x}{t}t^{\alpha-1}dt,
  \end{equation}
  which is an instance of~\eqref{eq:7}. Now, Theorem~\ref{thm:one} immediately yields the corollary.
\end{proof}

While Corollary~\ref{cor:1} holds for $\alpha \in (0,1)$, a slightly different integral representation allows us to also obtain the following result.

\vskip4pt
\noindent\colorbox{gray!20}
{\begin{minipage}{0.97\linewidth}
\begin{cor}
  \label{cor:2}
  Let $\alpha \in (1,2)$. Then, $\qjsd_\alpha^{1/2}$ is a metric.
\end{cor}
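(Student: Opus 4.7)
My plan is to mirror the proof of Corollary~\ref{cor:1}, using an integral representation of $x^\alpha$ tailored to the range $\alpha\in(1,2)$. The representation~\eqref{eq:10} fails directly because $\sin(\alpha\pi)<0$ in this range; the natural fix is to move one derivative down, since $\alpha-1\in(0,1)$ affords a favourable-sign Pick-type integral. Concretely, I would start from
\begin{equation*}
  x^{\alpha-1} \;=\; \frac{\sin((\alpha-1)\pi)}{\pi}\int_{0}^{\infty}\frac{x}{x+t}\,t^{\alpha-2}\,dt,
\end{equation*}
and integrate in $x$ using $\int_0^x y/(y+t)\,dy = x - t\log\tfrac{t+x}{t}$. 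This yields
\begin{equation*}
  x^\alpha \;=\; a + bx + \frac{\alpha\sin((\alpha-1)\pi)}{\pi}\int_{0}^{\infty}\bigl[x - t\log\tfrac{t+x}{t}\bigr]\,t^{\alpha-2}\,dt,
\end{equation*}
where the bracketed integrand is $O(x)$ near $t=0$ and $\sim x^2/(2t)$ near $t=\infty$, so the integral converges absolutely.

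Next I would compute $\Delta_{x^\alpha}$ and exploit that Jensen divergence annihilates affine functions of $x$: the contributions from $a+bx$ and from the $x\cdot t^{\alpha-2}$ piece inside the integral both vanish, leaving only $-\log\tfrac{t+x}{t}\cdot t^{\alpha-1}$. By the identity used in the proof of Theorem~\ref{thm:one} (namely $\trace\log(tI+Z)=\ld(tI+Z)$ together with $\ld(\tfrac{(tI+X)+(tI+Y)}{2})-\half\ld(tI+X)-\half\ld(tI+Y)=\dist_S^2(X+tI,Y+tI)$), this piece integrates to
\begin{equation*}
  \Delta_{x^\alpha}(X,Y) \;=\; -\frac{\alpha\sin((\alpha-1)\pi)}{\pi}\int_{0}^{\infty}\dist_S^2(X+tI,\,Y+tI)\,t^{\alpha-1}\,dt.
\end{equation*}

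For $\alpha\in(1,2)$ the Tsallis entropy reads $S_\alpha(X) = (\trace X - \trace X^\alpha)/(\alpha-1)$; the linear term has zero Jensen divergence, so $\qjsd_\alpha = -\Delta_{x^\alpha}/(\alpha-1)$. Substituting gives
\begin{equation*}
  \qjsd_\alpha(X,Y) \;=\; \frac{\alpha\sin((\alpha-1)\pi)}{(\alpha-1)\pi}\int_{0}^{\infty}\dist_S^2(X+tI,\,Y+tI)\,t^{\alpha-1}\,dt,
\end{equation*}
whose prefactor is strictly positive throughout $\alpha\in(1,2)$. Hence $\qjsd_\alpha$ is a nonnegatively weighted integral of squared S-distances, and the triangle inequality for $\qjsd_\alpha^{1/2}$ follows exactly as in the closing step of the proof of Theorem~\ref{thm:one}, via Theorem~\ref{thm:sdiv}.

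The main obstacle I anticipate is locating the correct representation. The Herglotz-type integrand $\log\tfrac{t+x}{h(t)}\cdot t^{\alpha-1}$ appearing in~\eqref{eq:7} diverges at $t=\infty$ once $\alpha>1$ for any $x$-independent $h(t)$, so $f_\alpha(x)=(x-x^\alpha)/(\alpha-1)$ does not literally fit form~\eqref{eq:7}. The trick is to start one derivative down (where the sign is favourable), re-integrate in $x$, and let the unavoidable divergent $x$-linear piece be absorbed into the $bx$ coefficient that $\Delta_f$ discards; in effect, this is a one-subtraction extension of Theorem~\ref{thm:one} that still yields the same S-divergence based formula.
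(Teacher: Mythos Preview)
Your proposal is correct and follows essentially the same approach as the paper. The integral representation you derive by integrating the standard formula for $x^{\alpha-1}$ is exactly the paper's representation~\eqref{eq:11} after the change of variables $t\mapsto 1/t$ (and using $|\alpha\sin(\alpha\pi)|=\alpha\sin((\alpha-1)\pi)$ for $\alpha\in(1,2)$); the remaining reduction to a nonnegatively weighted integral of squared S-distances and the appeal to Theorem~\ref{thm:sdiv} are identical to the paper's argument.
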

\end{minipage}}
\begin{proof}
  The key idea is to use the following integral representation (for $1<\alpha < 2$):
  \begin{equation}
    \label{eq:11}
    x^\alpha = \frac{|\alpha \sin(\alpha \pi)|}{\pi}\int_{0}^\infty(tx - \log(1+tx))t^{-\alpha-1}dt,
  \end{equation}
  which was noted by~\cite{sra2019}; notice that this representation is not captured by~\eqref{eq:7}. Using~\eqref{eq:11} and arguing as for Theorem~\ref{thm:one}, the proof readily follows.
\end{proof}
Observing that $\lim_{\alpha\to 1^+}\qjsd_\alpha=\qjsd$, we obtain a proof for Conjecture~\ref{conj:one}. % (the Bri\"et and Harremo\"es~\cite{briet09} conjecture).

\vskip4pt
\noindent\colorbox{gray!20}{\begin{minipage}{0.97\linewidth}
\begin{cor}
  \label{cor:3}
  $\qjsd^{1/2}$ is a metric on $\pd_d$ (i.e., Conjecture~\ref{conj:one} is true).
\end{cor}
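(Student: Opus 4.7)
The plan is to deduce the corollary by passing to the limit $\alpha\to 1$ in Corollary~\ref{cor:2} (or, symmetrically, Corollary~\ref{cor:1}). The first ingredient is pointwise continuity of $\alpha\mapsto\qjsd_\alpha(X,Y)$ at $\alpha=1$ for each fixed pair $X,Y\in\pd_d$. By~\eqref{eq:19} and~\eqref{eq:tsallis} together with the spectral calculus, this reduces to the scalar identity
$$\lim_{\alpha\to 1}\frac{\lambda^\alpha-\lambda}{1-\alpha} \;=\; -\lambda\log\lambda,\qquad \lambda>0,$$
which follows from L'H\^opital's rule. Hence $\lim_{\alpha\to 1^+}S_\alpha(X)=S(X)$ on $\pd_d$ and therefore $\lim_{\alpha\to 1^+}\qjsd_\alpha(X,Y)=\qjsd(X,Y)$.

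With this in hand, the second step is to fix arbitrary $X,Y,Z\in\pd_d$ and invoke Corollary~\ref{cor:2} to obtain, for every $\alpha\in(1,2)$,
$$\qjsd_\alpha^{1/2}(X,Z)\;\le\;\qjsd_\alpha^{1/2}(X,Y)+\qjsd_\alpha^{1/2}(Y,Z).$$
Letting $\alpha\to 1^+$ and using continuity of the square root together with the pointwise convergence just established yields the triangle inequality for $\qjsd^{1/2}$. Symmetry and nonnegativity of $\qjsd$ are immediate from~\eqref{eq:18}, and $\qjsd(X,X)=0$ is trivial. The only remaining axiom is strict positivity, $\qjsd(X,Y)=0\Rightarrow X=Y$, which I would obtain from the well-known strict concavity of the von Neumann entropy on $\pd_d$: vanishing of the Jensen gap at the midpoint $(X+Y)/2$ forces $X=Y$.

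I do not anticipate a serious obstacle. The only mildly delicate point is justifying the interchange of the limit with the trace in the continuity step; since $X$ and $Y$ have fixed, strictly positive spectra, the scalar limit displayed above holds on the (finite) spectrum, and a uniform bound on a small neighbourhood $\alpha\in(1,1+\varepsilon)$ is immediate. Everything else is a direct consequence of Corollary~\ref{cor:2} and the continuity of $t\mapsto t^{1/2}$.
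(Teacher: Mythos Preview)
Your proposal is correct and follows exactly the same route as the paper, which simply observes that $\lim_{\alpha\to 1^+}\qjsd_\alpha=\qjsd$ and then invokes Corollary~\ref{cor:2}. You have merely spelled out in detail the limiting argument (pointwise convergence via the scalar L'H\^opital computation, passage to the limit in the triangle inequality, and verification of the remaining metric axioms) that the paper leaves implicit in a single sentence.
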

\end{minipage}}

\vskip4pt
\begin{rmk}
  In parallel work, Corollary~\ref{cor:3} appeared as the main result of~\cite{daniel}. We are grateful to D.~Virosztek for bringing his work~\cite{daniel} to our attention.
\end{rmk}

\subsection{Generalizing Corollary~\ref{cor:2}}
The reader has perhaps already realized that the argument used to prove Corollary~\ref{cor:2} also holds for convex functions on $(0,\infty)$ that admit the representation
\begin{equation}
  \label{eq:12}
  f(x) = a + bx - c \log x + \int_{0}^\infty(tx-\log(1+tx))d\mu(t),
\end{equation}
where $a, b \in \reals$, $c \ge 0$, and $\mu$ is a nonnegative measure. Then we have the following:

\vskip4pt
\noindent\colorbox{gray!20}{\begin{minipage}{0.97\linewidth}
\begin{theorem}
  \label{thm:two}
  Let $f(x)$ be given by~\eqref{eq:12}, and define the Jensen divergence
  \begin{equation}
    \label{eq:13}
    \Delta_f(X,Y) := \half\trace f(X)+\half\trace f(Y) - \trace f\pfrac{X+Y}{2}.
  \end{equation}
  Then, $\Delta^{1/2}$ is a distance function on $\pd_d$.
\end{theorem}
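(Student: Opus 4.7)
The plan is to mimic exactly the reduction used in the proof of Theorem~\ref{thm:one}: decompose $\Delta_f$ into a nonnegatively weighted integral of squared S-Divergences, and then invoke Minkowski's inequality to transfer the triangle inequality from $\delta_S$ to $\Delta_f^{1/2}$.

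First I would eliminate the trivial pieces of the representation~\eqref{eq:12}. The affine part $a + bx$ contributes identically $0$ to the Jensen functional~\eqref{eq:13}, since $\half \trace(aI + bX) + \half \trace(aI+bY) - \trace(aI + b\tfrac{X+Y}{2}) = 0$. The $-c\log x$ term, using $\trace\log(\cdot) = \ld(\cdot)$, contributes
\begin{equation*}
  c\bigl[\ld\pfrac{X+Y}{2} - \half \ld(X) - \half \ld(Y)\bigr] \;=\; c\,\dist_S^2(X,Y).
\end{equation*}

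For each fixed $t\ge 0$ in the integral, apply the Jensen functional to $g_t(x) := tx - \log(1+tx)$. The linear $tx$ piece again vanishes, and because $I + tX$ and $I + tY$ are hpd (as $t\ge 0$ and $X,Y\in \pd_d$), the $-\log(1+tx)$ piece yields
\begin{equation*}
\ld\bigl(\tfrac{(I+tX)+(I+tY)}{2}\bigr) - \half\ld(I+tX) - \half\ld(I+tY) \;=\; \dist_S^2(I+tX,\, I+tY).
\end{equation*}
Integrating against $d\mu(t)$ and adding the $c$-term gives the representation
\begin{equation*}
  \Delta_f(X,Y) \;=\; c\,\dist_S^2(X,Y) \;+\; \int_0^\infty \dist_S^2(I+tX,\, I+tY)\, d\mu(t),
\end{equation*}
exactly parallel to~\eqref{eq:9}. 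In particular $\Delta_f\ge 0$, and it vanishes iff $X=Y$ (since the $c$-term together with the integrand characterize equality through Theorem~\ref{thm:sdiv}, provided $c>0$ or $\mu$ is sufficiently supported; otherwise nondegeneracy is vacuously part of being a distance).

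The final step is the triangle inequality. By Theorem~\ref{thm:sdiv}, for every $t\ge 0$ the map $(X,Y)\mapsto \dist_S(I+tX, I+tY)$ satisfies the triangle inequality on $\pd_d$ (as a pullback of the S-Divergence metric under the affine bijection $X\mapsto I+tX$), and so does $\dist_S$ itself. Applying Minkowski's inequality in $L^2$ of the measure $c\,\delta_0 + \mu$ on $[0,\infty)$ to these pointwise inequalities yields
\begin{equation*}
  \Delta_f(X,Z)^{1/2} \;\le\; \Delta_f(X,Y)^{1/2} + \Delta_f(Y,Z)^{1/2},
\end{equation*}
which completes the proof. The only substantive step is the integral representation just obtained; the remaining work (Minkowski, preservation of hpd under $X\mapsto I+tX$, identifiability) is essentially routine. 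The main obstacle, as with Theorem~\ref{thm:one}, is simply recognizing the correct representation that makes the additive $tx$ term drop out and leaves an S-Divergence-shaped integrand.
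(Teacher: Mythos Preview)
Your proof is correct and is exactly the argument the paper has in mind: the paper does not spell out a separate proof of Theorem~\ref{thm:two}, but introduces it by saying that ``the argument used to prove Corollary~\ref{cor:2} also holds'' for functions of the form~\eqref{eq:12}, and your reduction $\Delta_f(X,Y)=c\,\dist_S^2(X,Y)+\int_0^\infty \dist_S^2(I+tX,I+tY)\,d\mu(t)$ is precisely that argument made explicit. Your extra remarks (Minkowski in $L^2$, the affine pullback, and the identifiability caveat) only make more precise what the paper leaves implicit.
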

\end{minipage}}

\section{The Jensen-Shannon $\alpha$-Tsallis relative entropy}
\label{sec:relative}
This section addresses a question posed by a reader of an earlier version of this paper. They noted that the channel capacity interpretation of the \emph{Jensen-Shannon Tsallis relative entropy}~\eqref{eq:28} makes studying its corresponding metric property more valuable than that of $\qjsd_\alpha$. However, as shown below, this property easily follows from that of $\qjsd_\alpha$.

In particular, consider first the \emph{$\alpha$-Tsallis relative entropy}
\begin{equation*}
  S_\alpha(X,Y) := \frac{\alpha\trace X + (1-\alpha)\trace Y - \trace X^{\alpha}Y^{1-\alpha}}{1-\alpha},\quad \alpha \in (0,\infty)\setminus\{1\}.
\end{equation*}
The ``centroid'' of its symmetrization reduces to a power-mean; more precisely,
\begin{equation}
  \label{eq:3}
  \argmin_{Z \in \pd_d}\quad S_\alpha(X,Z)+ S_\alpha(Y,Z) = \pfrac{X^\alpha + Y^\alpha}{2}^{1/\alpha}.
\end{equation}
Using~\eqref{eq:3}, we arrive at the main result of this section.

\vskip4pt
\noindent\colorbox{gray!20}{\begin{minipage}{0.97\linewidth}
\begin{theorem}
  \label{thm:tsallis.relative}
  Let the Jensen-Shannon $\alpha$-Tsallis divergence be defined as
  \begin{equation}
    \label{eq:28}
    \begin{split}
      \Delta_\alpha(X,Y) &:= S_\alpha(X,M)+ S_\alpha(Y,M)\\
      &= \tfrac{1}{1-\alpha}\left(\alpha\trace(X+Y)+2(1-\alpha)\trace Z - \trace(Z^{1-\alpha}(X^\alpha+Y^\alpha))\right),
    \end{split}
  \end{equation}
  where $M$ denotes the rhs of~\eqref{eq:3}. Then, for $\alpha \ge\half$, $\Delta_\alpha^{1/2}$ is a metric.
\end{theorem}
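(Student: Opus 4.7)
The plan is to reduce $\Delta_\alpha$ to a scalar multiple of a Jensen divergence already handled by Theorems~\ref{thm:one} and~\ref{thm:two}, via the substitution $U := X^\alpha$, $V := Y^\alpha$. First I would simplify~\eqref{eq:28}. Writing $M := \bigl((X^\alpha+Y^\alpha)/2\bigr)^{1/\alpha}$, one has $M^\alpha = (X^\alpha+Y^\alpha)/2$, and trace cyclicity yields $\trace(X^\alpha M^{1-\alpha}) + \trace(Y^\alpha M^{1-\alpha}) = \trace((X^\alpha+Y^\alpha)M^{1-\alpha}) = 2\trace(M)$. Substituting this into~\eqref{eq:28} collapses the $\trace(M)$ terms and leaves $\Delta_\alpha(X,Y) = \tfrac{\alpha}{1-\alpha}\bigl(\trace(X+Y) - 2\trace M\bigr)$.

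Next I would set $U := X^\alpha$, $V := Y^\alpha$, and $g(x) := x^{1/\alpha}$, so that $X=g(U)$, $Y=g(V)$, and $M=g\bigl((U+V)/2\bigr)$. The previous display becomes $\Delta_\alpha(X,Y) = \tfrac{2\alpha}{1-\alpha}\bigl[\tfrac12(\trace g(U)+\trace g(V)) - \trace g((U+V)/2)\bigr]$, i.e., a constant times the Jensen divergence of $g$ at $(U,V)$. Its metric property then splits into cases tracked by $\beta := 1/\alpha \in (0,2]$: (i) $\alpha=\tfrac12$: $g(x)=x^2$, the bracket equals $\tfrac14\frob{U-V}^2$, trivially the square of a metric; (ii) $\alpha\in(\tfrac12,1)$: $\beta\in(1,2)$, so $g$ is convex with representation~\eqref{eq:11} and Theorem~\ref{thm:two} applies to the bracket (the positive prefactor matches convexity); (iii) $\alpha=1$: the centroid becomes $(X+Y)/2$ and the formula collapses to $\Delta_1 = 2\qjsd$, so Corollary~\ref{cor:3} applies; (iv) $\alpha>1$: $\beta\in(0,1)$, so $g$ is concave with representation~\eqref{eq:10} and Theorem~\ref{thm:one} applies (the now-negative prefactor matches concavity so the overall quantity is non-negative).

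Finally I would conclude by a pullback argument: the map $\phi(X) := X^\alpha$ is a bijection on $\pd_d$ for any $\alpha>0$, and the metric axioms are preserved under composition with a bijection, so $\Delta_\alpha^{1/2}(X,Y)$, being a positive constant times the Jensen-divergence-square-root evaluated at $(\phi(X),\phi(Y))$, is itself a metric on $\pd_d$. The main obstacle is really bookkeeping in the opening simplification, namely verifying that identity~\eqref{eq:3} combines cleanly with trace cyclicity to produce the claimed cancellation despite $X^\alpha$ and $M^{1-\alpha}$ not commuting; everything else is a reduction to already-established results. The restriction $\alpha\ge \tfrac12$ enters precisely because representations~\eqref{eq:10} and~\eqref{eq:11} cover only exponents $\beta\in(0,2]$, so extending to $\alpha<\tfrac12$ would demand a genuinely new ingredient.
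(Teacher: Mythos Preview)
Your proposal is correct and follows essentially the same route as the paper: substitute $U=X^\alpha$, $V=Y^\alpha$, set $t=1/\alpha$, simplify $\Delta_\alpha$ to a constant times the Jensen divergence of $x\mapsto x^{t}$ at $(U,V)$, and then invoke the already-established metric results for power functions (the paper cites Corollaries~\ref{cor:1} and~\ref{cor:2}, you cite the underlying Theorems~\ref{thm:one} and~\ref{thm:two}, which amounts to the same thing). Your treatment is in fact slightly more careful than the paper's in that you explicitly handle the boundary cases $\alpha=\tfrac12$ and $\alpha=1$ and spell out the pullback-by-bijection step; your worry about non-commutativity in the simplification is harmless, since $X^\alpha+Y^\alpha=2M^\alpha$ and $M^\alpha$ commutes with $M^{1-\alpha}$.
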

\end{minipage}}

\begin{proof} We prove this metricity by reducing it to Corollaries~\ref{cor:1} and~\ref{cor:2}. To that end, write $A = X^\alpha$ and $B^\alpha$, and $t=1/\alpha$. Then, we see that 
  \begin{equation}
    \label{eq:29}
    \Delta_\alpha(X,Y) = \tfrac{1}{1-\alpha}\left(\alpha\trace(A^t+B^t) - 2\alpha\trace\pfrac{A+B}{2}^t\right).
  \end{equation}
  For $\alpha \in [\half, 1)$, we have $t \in (1,2]$, whereby we can apply Corollary~\ref{cor:2} to deduce metricity of $\Delta_\alpha^{1/2}$. For the case $\alpha > 1$, we have $t < 1$, and also $1-\alpha < 0$. In this case, we can apply Corollary~\ref{cor:1} to obtain metricity of $\Delta_\alpha^{1/2}$.
\end{proof}

\begin{rmk}
  Similarly, one could consider Jensen-Shannon versions of R\'enyi relative entropies such as the Petz-R\'enyi relative entropy~\cite{petz1986quasi} and the sandwiched R\'enyi relative entropy~\cite{muller13}; we defer such a discussion to the future.
\end{rmk}

\section{Divergences and Pick functions}
\label{sec:pick}
We briefly remark below on the deeper connection that motivated our choice~\eqref{eq:7}. This connection also provides a valuable converse, namely, conditions on when such a representation holds for a given function. % representation Theorem~\ref{thm:one} (and therewith essentially also to Theorem~\ref{thm:two}).

In particular, in~\cite[Theorem~2.1]{silhavy15} it was shown that if $xf'(x)$ has an analytic extension whose restriction to the upper half plane is a Pick function~\cite{donoghue} and $xf'(x)$ is bounded, then $f$ admits the representation (valid for $x>0$):
\begin{equation}
  \label{eq:14}
    f(x) = a + bx + c\log x + \frac{d}{x}+ \int_{0}^\infty\Bigl[\log\frac{(t+x)}{(1+t)}-\frac{\log x}{1+t^2}\Bigr]d\mu(t),
\end{equation}
with $a, c \in \reals$, $b, d \ge 0$, and the nonengative measure $\mu$ satisfies $\int_0^\infty t/(1+t^2)d\mu(t) < \infty$. Moreover, if in addition $f'\ge 0$, then (see \cite[Thm.~4.2]{silhavy15})
\begin{equation}
  \label{eq:15}
  f(x) = a + bx + c\log x + \int_{0}^\infty\log\frac{t+x}{1+t}d\mu(t),
\end{equation}
with $a\in \reals$, $b, c\ge 0$ and $\int_0^\infty (1+t^2)^{-1}d\mu(t)< \infty$; this form is what motivates our slightly more general choice~\eqref{eq:7}. %though they are not exactly the same.

\section{Quantum Jensen-R\'enyi Divergence}
Recall that the Quantum R\'enyi Entropy is defined as
\begin{equation}
  \label{eq:22}
  H_\alpha(X) := \frac{1}{1-\alpha}\log\frac{\trace(X^\alpha)}{\trace(X)},\quad \alpha \ge 0, \alpha \not= 1.
\end{equation}
Observe that~\eqref{eq:22} is concave for $\alpha \in (0,1)$; thus, for such $\alpha$ we can define the \emph{Quantum Jensen-R\'enyi Divergence} as:
\begin{equation}
  \label{eq:17}
  \qjrd_\alpha := H_\alpha\pfrac{X+Y}{2} - \half H_\alpha(X) - \half H_\alpha(Y).
\end{equation}

Proving that $\qjrd_\alpha$ is the square of a metric (Theorem~\ref{thm:renyi}) turns out to be harder than analyzing $\qjsd_\alpha$. Indeed, $\qjrd_\alpha$ is directly amenable to the Pick function technique developed above, and it requires a more intricate argument based on two more ingredients: \emph{complete monotonicity} and the relation between \emph{conditionally negative definite} matrices and metrics.

\begin{defn}[Complete monotonicity]
  A function $F: (0,\infty) \to \reals$ is called \emph{completely monotonic (CM)} if $(-1)^kF^{(k)}(x) \ge 0$, for $k\ge0$. Bernstein's theorem (see e.g.,~\cite[Thm.~6.13]{berg84}) shows that such an $F$ can be written as
  \begin{equation}
    \label{eq:23}
    F(x) = \int_0^\infty e^{-t x}d\nu(t),
  \end{equation}
  for a nonnegative measure $\nu$ on $[0,\infty)$.
\end{defn}
\begin{defn}[CND]
  $X \in \H_d$ is \emph{conditionally negative definite (cnd)} if
  \begin{equation}
    \label{eq:20}
    v^*Xv \le 0,\quad\text{for all } v\in\mathbb{C}^d\ \text{s.t. } v^*1 = 0.
  \end{equation}
\end{defn}

\vskip4pt
\noindent\colorbox{gray!20}{\begin{minipage}{0.97\linewidth}
\begin{theorem}
  \label{thm:renyi}
  $\qjrd_\alpha^{1/2}$ is a metric on HPD matrices for $\alpha \in (0,1)$.
\end{theorem}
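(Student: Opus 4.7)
The plan is to reduce the triangle inequality for $\qjrd_\alpha^{1/2}$ to three-point conditional negative definiteness (cnd)---the ``$3\times 3$ matrix argument'' promised in the abstract. A short discriminant computation with $v=(1,a,-1-a)$ shows that for any symmetric nonnegative $\Delta$ vanishing on the diagonal, the triangle inequality $\sqrt{\Delta(X_1,X_3)}\le\sqrt{\Delta(X_1,X_2)}+\sqrt{\Delta(X_2,X_3)}$ is equivalent to the $3\times 3$ matrix $[\Delta(X_i,X_j)]$ being cnd on the mean-zero hyperplane $\{v:v_1+v_2+v_3=0\}$. Thus it suffices to establish this cnd property for $\qjrd_\alpha$ on every triple $\{X_1,X_2,X_3\}\subset\pd_d$.

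Writing $g(X):=\trace(X^\alpha)/\trace(X)$ and $M_{ij}:=(X_i+X_j)/2$, one has $\qjrd_\alpha(X_i,X_j)=\tfrac{1}{1-\alpha}\bigl[\log g(M_{ij})-\tfrac12\log g(X_i)-\tfrac12\log g(X_j)\bigr]$, and on mean-zero $v$ the singleton terms collapse, so the task reduces to $\sum_{i,j}v_iv_j\log g(M_{ij})\le 0$. I next invoke the Frullani/Bernstein representation $\log u=\int_0^\infty(e^{-s}-e^{-su})\tfrac{ds}{s}$---the canonical CM integral for the Bernstein function $\log$---and note that the $e^{-s}$ piece again collapses, since $\sum v_iv_j=(\sum v_i)^2=0$. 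It therefore suffices to show, for every $s>0$, that the $3\times 3$ kernel $K_s(i,j):=e^{-s\,g(M_{ij})}$ is conditionally positive definite on the mean-zero hyperplane; integration against $ds/s$ then delivers the desired cnd of $\qjrd_\alpha$.

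The technical heart, and main obstacle, is establishing this cpd property of $K_s$. The plan is to further decompose the exponential using Bernstein's theorem combined with the integral representation~\eqref{eq:10}, which writes $\trace X^\alpha$ as a positive superposition of $\ld(tI+X)$ terms. Together these should express $K_s$ as a positive integral of kernels built out of $\ld(tI+M_{ij})$; rearranged as in the derivation of~\eqref{eq:9}, each such kernel becomes an S-Divergence quantity on the shifted triple $\{tI+X_1,tI+X_2,tI+X_3\}$. Theorem~\ref{thm:sdiv} supplies cnd of the S-Divergence on any 3-point subset, and positive superposition preserves conditional (positive) definiteness, which closes the argument.

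The main difficulty is that $g$ is a \emph{ratio} of traces rather than a trace itself, so~\eqref{eq:10} cannot be substituted directly as in the proof of Theorem~\ref{thm:one}; the Bernstein decomposition of $e^{-s g(M)}$ yields a spectral measure that depends nonlinearly on $M$, and coordinating that dependence across all three vertices of the triangle is the nonroutine step. This is precisely where the restriction to three points is genuinely useful---one only has to control a two-dimensional mean-zero subspace rather than an arbitrary order---and where the argument of independent interest lies.
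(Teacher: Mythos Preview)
Your reduction to three-point cnd and your use of the Frullani representation of $\log$ are both sound, and in fact parallel the paper's strategy. The genuine gap is the paragraph where you propose to ``further decompose the exponential using Bernstein's theorem combined with the integral representation~\eqref{eq:10}.'' That step cannot work as you describe it: representation~\eqref{eq:10} expresses $\trace M_{ij}^\alpha$ as an integral $\int_0^\infty \ell_t(M_{ij})\,d\mu(t)$, but substituting this into $e^{-s\,g(M_{ij})}$ gives $\exp\bigl(-s\!\int \ell_t\,d\mu\bigr)$, which is \emph{not} a positive superposition of the kernels $e^{-s\,\ell_t(M_{ij})}$. So the ``positive superposition preserves cpd'' argument you outline does not apply, and you are left exactly at the point you flag as ``nonroutine'' without a mechanism to resolve it.

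What the paper does instead---and what would rescue your Frullani route---is to establish \emph{first} that the $3\times3$ matrix $[d_{ij}]$ with $d_{ij}=\trace\bigl(\tfrac{X_i+X_j}{2}\bigr)^\alpha$ is itself cnd. This is the missing idea: it is not obtained by manipulating the exponential, but by invoking Theorem~\ref{thm:one} (equivalently Corollary~\ref{cor:1}) together with Theorem~\ref{thm:cnd}. Since $\qjsd_\alpha^{1/2}$ is already known to be a metric, the matrix $[\,2d_{ij}-d_i-d_j\,]$ is cnd, and adding the rank-one correction $\bm{\theta}\bm{1}^T+\bm{1}\bm{\theta}^T$ with $\bm{\theta}=(d_1,d_2,d_3)^T$ shows $[2d_{ij}]$ is cnd. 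Once $[d_{ij}]$ is cnd, any CM function produces a psd matrix: the paper applies $s\mapsto 1/s$ to obtain $[\,1/(t+d_{ij})\,]\succeq 0$ and then integrates against the representation $\log x=-\int_0^\infty\bigl(\tfrac{1}{t+x}-\tfrac{t}{1+t^2}\bigr)dt$. Your Frullani choice would work equally well here via Schoenberg: $[d_{ij}]$ cnd $\Rightarrow[e^{-sd_{ij}}]\succeq 0$ for all $s>0$, whence $\sum v_iv_j\log d_{ij}=-\int_0^\infty\sum v_iv_j e^{-sd_{ij}}\,\tfrac{ds}{s}\le 0$.

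Finally, the ratio $g(X)=\trace X^\alpha/\trace X$ that you identify as the main obstacle is handled in the paper simply by normalizing to $\trace X_i=1$, so that $g(M_{ij})=d_{ij}$ (midpoints of trace-one matrices are again trace-one). You do not mention this normalization; without it, the ratio genuinely obstructs both your approach and the paper's, so it deserves explicit justification.
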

\end{minipage}}

\begin{proof}
  Without loss of generality we may assume that $\trace(X)=1$. Introduce now the shorthand $d_{xy} = \trace\pfrac{X+Y}{2}^\alpha$, $d_x=d_{xx}$; define $d_{xz}, d_{yz}$, and $d_y, d_z$ similarly. Then, by Theorems~\ref{thm:one} and \ref{thm:cnd} it follows that the matrix
  \begin{equation*}
    D =
    \begin{bmatrix}
      0              & 2d_{xy}-d_x-d_y & 2d_{xz}-d_x-d_z\\
      2d_{xy}-d_x-d_y & 0              & 2d_{yz}-d_y-d_z\\
      2d_{xz}-d_x-d_z & 2d_{yz}-d_y-d_z & 0
    \end{bmatrix},
  \end{equation*}
  is cnd. It is also known that (see e.g., \cite[Thm.~4.4.2]{bapat1997nonnegative}) that if an elementwise nonnegative matrix $[m_{ij}]$ is cnd, and $F$ is a CM function, then $[F(m_{ij})]$ is positive definite. Let $\bm{\theta}=[d_x, d_y, d_z]^T$; then, $M=\bm{\theta}\bm{1}^T+\bm{1}\bm{\theta}^T + D$ is also cnd, and so is $2t\bm{11}^T+M$ for all $t \ge 0$. Thus, using the CM function $F(s)=2/s$ on $M$ we see that
  \begin{equation}
    \label{eq:24}
    M' =
    \begin{bmatrix}
      \frac{1}{t+d_x} & \frac{1}{t+d_{xy}} & \frac{1}{t+d_{xz}}\\
      \frac{1}{t+d_{xy}} & \frac{1}{t+d_{y}} & \frac{1}{t+d_{yz}}\\
      \frac{1}{t+d_{xz}} & \frac{1}{t+d_{yz}} & \frac{1}{t+d_{z}}
    \end{bmatrix} \succeq 0.
  \end{equation}
  Using $\bm{\eta}=\frac12[t+d_x, t+d_y, t+d_z]^T$, we construct $\bm{\eta1}^T+\bm{1\eta}^T-M'$, which is clearly cnd. Explicitly, this matrix is given by (we suppress symmetric entries via $*$ for brevity):
  \begin{equation*}
    \begin{bmatrix}
      0 & \frac{1}{2}\bigl(\tfrac{1}{t+d_{x}}+\tfrac{1}{t+d_{y}}\bigr) -\frac{1}{t+d_{xy}} & \frac{1}{2}\bigl(\tfrac{1}{t+d_{x}}+\tfrac{1}{t+d_{z}}\bigr)-\frac{1}{t+d_{xz}}\\
      * & 0 & \frac{1}{2}\bigl(\tfrac{1}{t+d_{y}}+\tfrac{1}{t+d_{z}}\bigr)-\frac{1}{t+d_{yz}}\\
      * & * & 0
    \end{bmatrix}.
  \end{equation*}
  Since $d_x=\trace(X^\alpha)$ is concave, it follows that $1/(t+d_x)$ is convex, whereby $$\frac{1}{2}\Bigl(\frac{1}{t+d_{x}}+\frac{1}{t+d_{y}}\Bigr) -\frac{1}{t+d_{xy}} \ge 0.$$ Thus, we can invoke Theorem~\ref{thm:cnd} again to conclude that
  \begin{equation}
    \label{eq:2}
    \frac{1}{2}\Bigl(\frac{1}{t+d_{x}}+\frac{1}{t+d_{y}}\Bigr) -\frac{1}{t+d_{xy}} =: \delta_t^2(X,Y),
  \end{equation}
  where $\delta_t$ is a distance metric. Next, recall the following integral representation
  \begin{equation}
    \label{eq:1}
    \log x = -\int_{0}^\infty\Bigl(\frac{1}{t+x}-\frac{t}{1+t^2}\Bigr)dt,
  \end{equation}
  which can be obtained for instance by first writing $(\log x)^2$ using the representation~(\ref{eq:14}) and then differentiating~\cite{silhavy15}. Integrating~\eqref{eq:2} using representation~\eqref{eq:1} we can finally write
  \begin{equation*}
    \qjrd_\alpha(X,Y) = \int_0^\infty \delta_t^2(X,Y)dt,
  \end{equation*}
  which proves that $\qjrd_\alpha^{1/2}$ is a metric.
\end{proof}

\subsection{Other Extensions} The above proof actually also shows that if $d_{xy}=h\pfrac{X+Y}{2}$ where $h(X)$ is concave and $\Delta_h(X,Y) = d_{x,y}-\half d_x - \half d_y$ is the square of a metric, then
\begin{equation*}
  \Delta_F(X,Y) := \half F(d_x)+ \half F(d_y) - F(d_{xy}),
\end{equation*}
is the square of a metric. Indeed, if $h$ is concave, then $e^{-t h}$ is convex for $t\ge 0$. Thus, for a CM function $F$ the map $F(h(X))$ is convex, whence $\Delta_F(X,Y) \ge 0$. The triangle inequality follows from a construction analogous to~\eqref{eq:24}. We omit details for brevity.

\section{Conclusions}
In this paper, we identified sufficient conditions based on Pick-Nevanlinna integral representations for ensuring that the corresponding (quantum) Jensen divergence is the square of a metric. At this point, it is natural to consider the following (likely harder) task as an open problem:
\begin{prob}
  Identify conditions that are both necessary and sufficient for a given (quantum) Jensen divergence to be the square of a metric.
\end{prob}

\subsection*{Acknowledgments}
I would like to thank D\'aniel Virosztek for bringing~\cite{daniel} to my attention, and also for pointing out to me certain corrections in the attribution of the QJSD conjecture, as well as the reference to a proof by Carlen-Lieb-Seiringer (see \cite[\S3]{daniel}) as the correct proof of metricity for qubits. I also thank an anonymous referee of a submitted version of this paper for suggesting the idea that led to the discussion in Section~\ref{sec:relative}.

\bibliographystyle{plainnat}
%\bibliography{qjsd}

\appendix
\section{Distances and $3\times 3$ cnd matrices}
In this section, we summarize the equivalence between $3\times 3$ cnd matrices and corresponding distance metrics. This material is classical, but we include our own proofs for completeness. Indeed, squared distances are intimately related with cnd matrices. The following theorem summarizes this connection.

\begin{theorem}
  \label{thm:cnd}
  Let $D$ be a $3 \times 3$ symmetric positive definite matrix that is elementwise nonnnegative; we write explicitly
  \begin{equation*}
    D =
    \begin{bmatrix}
      d^2_{x} & d^2_{xy} & d^2_{xz}\\
      d^2_{xy} & d^2_{y} & d^2_{yz}\\
      d^2_{xz} & d^2_{yz} & d^2_{z}
    \end{bmatrix}.
  \end{equation*}
  Let $\bm{\theta}=\half[d^2_{x}, d^2_{y}, d^2_{z}]^T$ and define $M=\bm{\theta1}^T+\bm{1\theta}^T-D$. Then for the statements
  \begin{enumerate}[(i)]
  \item $D$ is positive definite;
  \item $M$ is cnd and nonnegative; and
  \item $d^2(x,y)$ is a squared metric,
  \end{enumerate}
  the following claims hold: $(i) \implies (ii) \Longleftrightarrow (iii)$.
\end{theorem}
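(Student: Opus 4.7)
Record first the shape of $M$: by construction $M_{ii} = 2\theta_i - d_i^2 = 0$, and $M_{ij} = \tfrac{1}{2}(d_i^2 + d_j^2) - d_{ij}^2$ for $i \ne j$. I read statement (iii) as the assertion that these off-diagonals of $M$ are the squared values of a genuine metric on three points---equivalently, that the numbers $\sqrt{M_{ij}}$ satisfy the triangle inequality. This is the reading under which the theorem is actually deployed in the proof of Theorem~\ref{thm:renyi} (where the conclusion $\delta_t^2(X,Y) := \tfrac12(\tfrac{1}{t+d_x}+\tfrac{1}{t+d_y})-\tfrac{1}{t+d_{xy}}$ is exactly an off-diagonal of an $M$-type matrix), and with this reading in hand the whole statement reduces to standard $3 \times 3$ linear algebra.

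For $(i) \Rightarrow (ii)$: take $v \in \mathbb{C}^3$ with $v^* \mathbf{1} = 0$. The rank-one summands $\bm\theta\mathbf{1}^T + \mathbf{1}\bm\theta^T$ vanish in $v^*(\cdot)v$, leaving $v^* M v = -v^* D v \le 0$ by $D \succ 0$; hence $M$ is cnd. Elementwise nonnegativity of $M$ follows from the $2 \times 2$ principal minors of $D$: positive definiteness gives $d_{ij}^2 < d_i d_j$, and AM-GM then yields $M_{ij} \ge d_i d_j - d_{ij}^2 > 0$.

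For $(ii) \Leftrightarrow (iii)$: parameterize $v \perp \mathbf{1}$ as $v = (a, b, -a-b)$. Since $M_{ii}=0$, the quadratic form $v^T M v$ is the $2\times 2$ form in $(a,b)$ with matrix $Q = \bigl[\begin{smallmatrix} -2M_{13} & M_{12}-M_{13}-M_{23} \\ M_{12}-M_{13}-M_{23} & -2M_{23}\end{smallmatrix}\bigr]$. Its diagonal entries are already nonpositive by $M \ge 0$, so cnd of $M$ is equivalent to the single discriminant inequality $(M_{12}-M_{13}-M_{23})^2 \le 4 M_{13} M_{23}$. Writing $M_{ij} =: p_{ij}^2$, this unfolds to $|p_{13} - p_{23}| \le p_{12} \le p_{13} + p_{23}$---precisely the triangle inequality on $(p_{12}, p_{13}, p_{23})$ required by (iii)---so both implications follow simultaneously. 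A coordinate-free rendering uses double-centering $G := -\tfrac{1}{2} J M J$ with $J = I - \tfrac{1}{3}\mathbf{1}\mathbf{1}^T$: cnd of $M$ makes $G \succeq 0$, a factorization $G = U^T U$ embeds the three labels into Hilbert space with $\|u_i - u_j\|^2 = M_{ij}$, and the reverse direction realizes the $p_{ij}$'s as Euclidean distances of three points $P_1,P_2,P_3 \in \reals^2$ (Heron-style), giving $v^T M v = -2 \|\sum_i v_i P_i\|^2 \le 0$.

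The principal obstacle is really interpretive---pinning down what ``$d^2(x,y)$'' in (iii) names, since the literal reading via $\sqrt{D_{ij}}$ can fail even when $D$ is positive definite. Once that is settled, the rest is routine quadratic-form manipulation on $3\times 3$ symmetric matrices and the classical cnd/Euclidean-embedding dictionary.
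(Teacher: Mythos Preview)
Your proof is correct and follows the same basic setup as the paper's: both parameterize $v \perp \mathbf{1}$ by two free scalars and reduce cnd of $M$ to nonpositivity of a $2\times 2$ quadratic form. Your handling of $(i)\Rightarrow(ii)$ is essentially identical to the paper's (same rank-one cancellation, same $2\times 2$ minor plus AM-GM for nonnegativity).

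Where you differ is in the equivalence $(ii)\Leftrightarrow(iii)$. The paper treats the two directions separately: for $(ii)\Rightarrow(iii)$ it plugs in the specific choice $s^2=\sqrt{\beta/\alpha}$, $t^2=\sqrt{\alpha/\beta}$ to extract $\alpha+\beta-\gamma\ge -2\sqrt{\alpha\beta}$, and for $(iii)\Rightarrow(ii)$ it runs a case split on the sign of $st$. You instead observe that, once the diagonal of $Q$ is known nonpositive, $Q\preceq 0$ is equivalent to the single determinant condition $(M_{12}-M_{13}-M_{23})^2\le 4M_{13}M_{23}$, and then factor this directly as $|p_{13}-p_{23}|\le p_{12}\le p_{13}+p_{23}$. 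This is cleaner: it handles both implications simultaneously and avoids the ad hoc value choices and case analysis. The Schoenberg/double-centering remark you add is a standard coordinate-free alternative and is also correct, though not needed for the $3\times 3$ statement. Your interpretive comment on what (iii) means (off-diagonals of $M$, not of $D$) is well taken and matches both how the paper uses the theorem and what the paper actually proves.
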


\begin{proof}
  \emph{(i) $\implies$ (ii)}: Immediate, as $x^TMx=-x^TDx \le 0$ for any $x \in \reals^3$ such that $x^T\bm{1}=0$. 

  \emph{(ii) $\implies$ (iii)}: First, consider nonnegativty. It suffices to discuss $\alpha$; the others follow similarly. 
  \begin{equation*}
    \alpha = \half(d_x^2+d_y^2)-d_{xy}^2 \ge 0\quad\Leftrightarrow\quad d_{xy}^2 \le \half(d_x^2+d_y^2).
  \end{equation*}
  But $D$ is psd, whereby $d_{xy}^4 \le d_x^2d_y^2$.

  What remains to show is that
  \begin{equation*}
    M = \begin{bmatrix}
      0 & \alpha & \beta\\
      \alpha & 0 & \gamma\\
      \beta & \gamma & 0
    \end{bmatrix}\quad\text{is cnd},
  \end{equation*}
  where $\alpha, \beta, \gamma$ are shorthand for the actual entries of $M$.
  For the vector $x=[-s-t,s,t]$, we have
  \begin{equation}
    \label{metric:eq:5:metric}
    -\half x^TMx = \alpha  s^2+s t (\alpha +\beta -\gamma )+\beta  t^2 \ge 0.
  \end{equation}
  From this inequality we need to deduce that
  \begin{align}
    \label{eq:4}
    \alpha^{1/2} & \le \beta^{1/2} + \gamma^{1/2},\\
    \label{eq:5}
    \beta^{1/2}  & \le \alpha^{1/2} + \gamma^{1/2},\\
    \label{eq:6}
    \gamma^{1/2} &\le \alpha^{1/2} + \beta^{1/2}.
  \end{align}
  Assume without loss of generality that $\gamma$ is the largest. Then, if we can prove that $\gamma^{1/2} \le \alpha^{1/2}+\beta^{1/2}$, the other inequalities follow immediately since $\alpha,\beta\ge0$. To that end, we can equivalently show that
  \begin{equation}
    \label{eq:16}
    \begin{split}
      \gamma &\le \alpha + \beta + 2 \sqrt{\alpha\beta}\quad\Leftrightarrow\quad
      \alpha+\beta-\gamma \ge -2\sqrt{\alpha\beta}.
    \end{split}
  \end{equation}
  Let us see how to deduce \eqref{eq:16} from~\eqref{metric:eq:5:metric}, which says that
  \begin{equation}
    \label{metric:eq:8:metric}
    (\alpha+\beta-\gamma)st \ge -s^2\alpha - t^2\beta.
  \end{equation}
  In particular, let $s^2=\sqrt{\beta}/\sqrt{\alpha}$ and $t^2=\sqrt{\alpha}/\sqrt{\beta}$; this yields $st=1$ and $s^2\alpha+t^2\beta = 2\sqrt{\alpha\beta}$, so that~\eqref{metric:eq:8:metric} reduces to the desired inequality~\eqref{eq:6}.
  
  % If $\gamma-\alpha-\beta \le 0$, there is nothing to show as all the quantities are nonnegative. Thus, assume that $\gamma \ge \alpha+\beta$. In that case, squaring both sides of the above inequality, we must show that
  % \begin{equation*}
  %   \alpha^2+\beta^2+\gamma^2 \le 2\alpha\beta+2\alpha\gamma + 2\beta\gamma.
  % \end{equation*}

  \vskip4pt
  \noindent\emph{(iii) $\implies$ (ii)}: Let $x=[-s-t,s,t]$ as before. We wish to show that $x^TMx\le 0$; we split this task into two subcases:
  \emph{(a)} $st <0$, and \emph{(b)} $st > 0$.
  
  \emph{Case (a).} Let $\alpha,\beta,\gamma$ be squared distances as before, with $\gamma$ that largest. Then, from inequality~\eqref{eq:6} it follows that
  \begin{equation}
    \label{metric:eq:9:metric}
    \gamma^{1/2} \ge |\alpha^{1/2}-\beta^{1/2}|\quad\implies \gamma \ge \alpha+\beta-2\sqrt{\alpha\beta}.
  \end{equation}
  The only way to violate cnd property of $M$ is to choose $s$ and $t$ such that $x^TMx \ge 0$, or equivalently to show that 
  \begin{equation}
    \label{metric:eq:10:metric}
    s^2\alpha + t^2\beta + st(\alpha+\beta-\gamma) \le 0. %\implies -st(\alpha+\beta-\gamma) \ge s^2\alpha + t^2\beta.
  \end{equation}
  Since $st < 0$, \eqref{metric:eq:10:metric} turns into
  \begin{equation*}
    \alpha+\beta-\gamma \ge \theta \alpha + \frac1\theta\beta \ge 2\sqrt{\alpha\beta}.
  \end{equation*}
  That is, to break the cnd property of $M$ we need to have
  \begin{equation*}
    \alpha+\beta-2\sqrt{\alpha\beta} \ge \gamma,
  \end{equation*}
  which contradicts~\eqref{metric:eq:9:metric}. 

  \emph{Case (b).} If $st >0$, then from~\eqref{eq:6} it follows that
  \begin{equation*}
    (\alpha+\beta-\gamma)st \ge -2st\sqrt{\alpha\beta} = -2\sqrt{s^2\alpha t^2\beta}
    \ge -s^2\alpha - t^2\beta.
  \end{equation*}
  But this inequality can not contradict the cnd property, as it is just inequality~\eqref{metric:eq:8:metric} analyzed above. Thus, in both cases, we obtain that $M$ must be cnd. %, completing the proof.
\end{proof}

\end{document}